\newcolumntype{+}{>{\global \let \currentrowstyle \relax}}
\newcolumntype{^}{>{\currentrowstyle }}
\newtheorem{thm}{Theorem}[section]
\newenvironment{proof}{%
{\noindent \bf Proof. }%
}{%
\hfill$\Box$\\%
}
\def\R{\mathbb{R}}
\begin{document}

\begin{frontmatter}

\title{Accounting for Symptomatic and Asymptomatic in a SEIR-type model of COVID-19}

\author[csu]{Jayrold P. Arcede}
\author[ili]{Randy L. Caga-anan}
\author[lamfa,csu]{Cheryl Q. Mentuda}
\author[lamfa]{Youcef Mammeri\corref{cor1}}
\address[lamfa]{Laboratoire Ami\'enois de Math\'ematique Fondamentale et Appliqu\'ee, CNRS UMR 7352, Universit\'e de Picardie Jules Verne, 80069 Amiens, France}
\address[csu]{Department of Mathematics, Caraga State University, Butuan City, Philippines}
\address[ili]{Department of Mathematics and Statistics, MSU-Iligan Institute of Technology, Iligan City, Philippines}

\cortext[cor1]{Corresponding author: youcef.mammeri@u-picardie.fr}

\begin{abstract}
A mathematical model was developed describing the dynamic of the COVID-19 virus over a population considering that the infected can either be symptomatic or not. The model was calibrated using data on the confirmed cases and death from several countries like France, Philippines, Italy, Spain, United Kingdom, China, and the USA. First, we derived the basic reproduction number, $\mathcal{R}_0$, and estimated the effective reproduction $\mathcal{R}_{\rm eff}$ for each country. 
Second, we were interested in the merits of interventions, either by distancing or by treatment. Results revealed that total and partial containment is effective in reducing the transmission. However, its duration may be long to eradicate the disease (104 days for France). By setting the end of intervention as the day when hospital capacity is reached, numerical simulations showed that the duration can be reduced (up to only 39 days for France if the capacity is 1000 patients).  Further, results pointed out that the effective reproduction number remains large after containment. Therefore, testing and isolation are necessary to stop the disease.
\end{abstract}

\begin{keyword}

    COVID-19 \sep SE$I_s$$I_a$UR Model \sep Reproduction number \sep Interventions 

\MSC[2010] 92D30\sep 37N25 \sep 34D20
\end{keyword}

\end{frontmatter}


\section{Introduction}
In late 2019, a disease outbreak emerged in a city of Wuhan, China.  The culprit was a certain strain called  Coronavirus Disease 2019 or COVID-19 in brief \citep{WHO2020a}.  This virus has been identified to cause fever, cough, shortness of breath, muscle ache, confusion, headache, sore throat, rhinorrhoea, chest pain, diarrhea, and nausea and vomiting \citep{Hui2020, Chen2020}. COVID-19 belongs to the \textit{Coronaviridae} family. A family of coronaviruses that cause diseases in humans and animals, ranging from the common cold to more severe diseases.  Although only seven coronaviruses are known to cause disease in humans,  three of these,  COVID-19 included, can cause a much severe infection, and sometimes fatal to humans. The other two to complete the list were the severe acute respiratory syndrome (SARS) identified in 2002 in China, and the Middle East respiratory syndrome (MERS) originated decade after in Saudi Arabia. 

Like MERS and SARS, COVID-19 is a zoonotic virus and believed to be originated from bats transmitted to humans \citep{Zhou2020}. 
In comparison with SARS, MERS, the COVID-19 appears to be less deadly. However, the World Health Organization (WHO)  reported that it has already infected and killed more people than its predecessors combined.   Also, COVID-19 spreads much faster than SARS and MERS. It only took over a month before it surpassed the number of cases recorded by the SARS outbreak in 2012. According to WHO, it only took  67 days from the beginning of the outbreak in China last December 2019 for the virus to infect the first 100,000 people worldwide \citep{WHO2020b}.
As of the 25th of March 2020, a cumulative total of 372,757 confirmed cases, while 16,231 deaths have been recorded for COVID-19 by World Health Organization \citep{WHO2020c}.

Last 30th of January, WHO characterized COVID-19 as Public Health Emergency of International Concern (PHEI) and urge countries to put in place strong measures to detect disease early, isolate and treat cases, trace contacts, and promote social distancing measures commensurate with the risk \citep{WHO2020d}. In response, the world implemented its actions to reduce the spread of the virus. Limitations on mobility, social distancing, and self-quarantine have been applied.  Moreover,  health institutions advise people to practice good hygiene to keep from being infected.   All these efforts have been made to reduce the transmission rate of the virus.

For the time being,  COVID-19 infection is still on the rise. Government and research institutions scramble to seek antiviral treatment and vaccines to combat the disease.  Several reports list possible drugs combination to apply, yet it is still unclear which drugs could combat the viral disease and which won't.  

Several mathematical models have been proposed from various epidemiological groups. These models help governments as an early warning device about the size of the outbreak, how quickly it will spread, and how effective control measures may be. However, due to the limited emerging understanding of the new virus and its transmission mechanisms, the results are largely inconsistent across studies. 

In this paper, we will mention a few models and, in the end, to propose one. Gardner and his team \citep{Gardner2020} at  Center for Systems Science and Engineering, Johns Hopkins University, implemented on a previously published model applied for COVID-19. It is a metapopulation network model represented by a graph with each nodes follows a  discrete-time Susceptible-Exposed-Infected-Recovered (SEIR) compartmental model. The model gives an estimate of the expected number of cases in mainland China at the end of January 2020, as well as the global distribution of infected travelers. They believe that the outbreak began in November 2019  with hundreds of infected already present in Wuhan last early December 2019.  
\cite{Wu2020} from WHO Collaborating Centre for
Infectious Disease Epidemiology and Control at the University of Hong Kong presented a modeling study on the nowcast and forecast of the 2019-nCoV outbreak at Wuhan.  The group used an SEIR metapopulation model to simulate epidemic and found reproductive number  $R_0$ around 2.68 (with 95\%
credible interval 2.47-2.86).  \cite{Imai2020} estimates  $R_0$ around 2.6 with uncertainty range of 1.5-3.5.  \cite{Zhao2020a, Zhao2020b} found $R_0$ to range from 2.24 to 5.71 based on the reporting rate of cases. If the reporting rate   increase 2-fold, $R_0=3.58$, if it increase 8-fold, $R_0=2.24$. If there is no change in the reporting rate, the estimated $R_0$ is 5.71. 
Similar to the above authors,  \cite{Wang2020}  employed an infectious disease dynamics model (SEIR model) for modeling and predicting the number of COVID-19 cases in Wuhan. They opined that to reduce $R_0$ significantly, the government should continue implementing strict measures on containment and public health issues.  
In the same tune as the latter, the model of \cite{dan2020} also suggests continuing to implement effective quarantine measures to avoid a resurgence of infection. The model consists of five (5) compartments:  susceptible, infected, alternative infection, detected, and removed.

Here, we proposed an extension from the classical SEIR model by adding a compartment of asymptomatic infected.  We address the challenge of predicting the spread of COVID-19 by giving our estimates for the basic reproductive numbers $\mathcal{R}_0$ and its effective reproductive number $\mathcal{R}_{\rm eff}$.  Afterward, we also assess risks and interventions via containment strategy or treatments of exposed and symptomatic infected.

The rest of the paper is organized as follows. Section 2, outlines our methodology. Here the model was explained, where the data was taken, and its parameter estimates.  Section 3 contained the qualitative analysis for the model. Here, we give the closed-form equation of reproductive number $\mathcal{R}_0$, then tackling the best strategy to reduce transmission rates.  Finally, section 4 outlines our brief discussion on some measures to limit the outbreak.

\section{Materials and methods}

\subsection{Confirmed and death data}
In this study, we used the publicly available dataset of COVID-19 provided by the Johns Hopkins University \citep{Dong2020}. This dataset includes many countries' daily count of confirmed cases, recovered cases, and deaths. Data can be downloaded from 
{\small \url{https://github.com/CSSEGISandData/COVID-19/tree/master/csse_covid_19_data}.}  These data are collected through public health authorities' announcements and are directly reported public and unidentified patient data, so ethical approval is not required.

\subsection{Mathematical model}

\begin{figure}[htbp]
\centering
\includegraphics[scale=0.5]{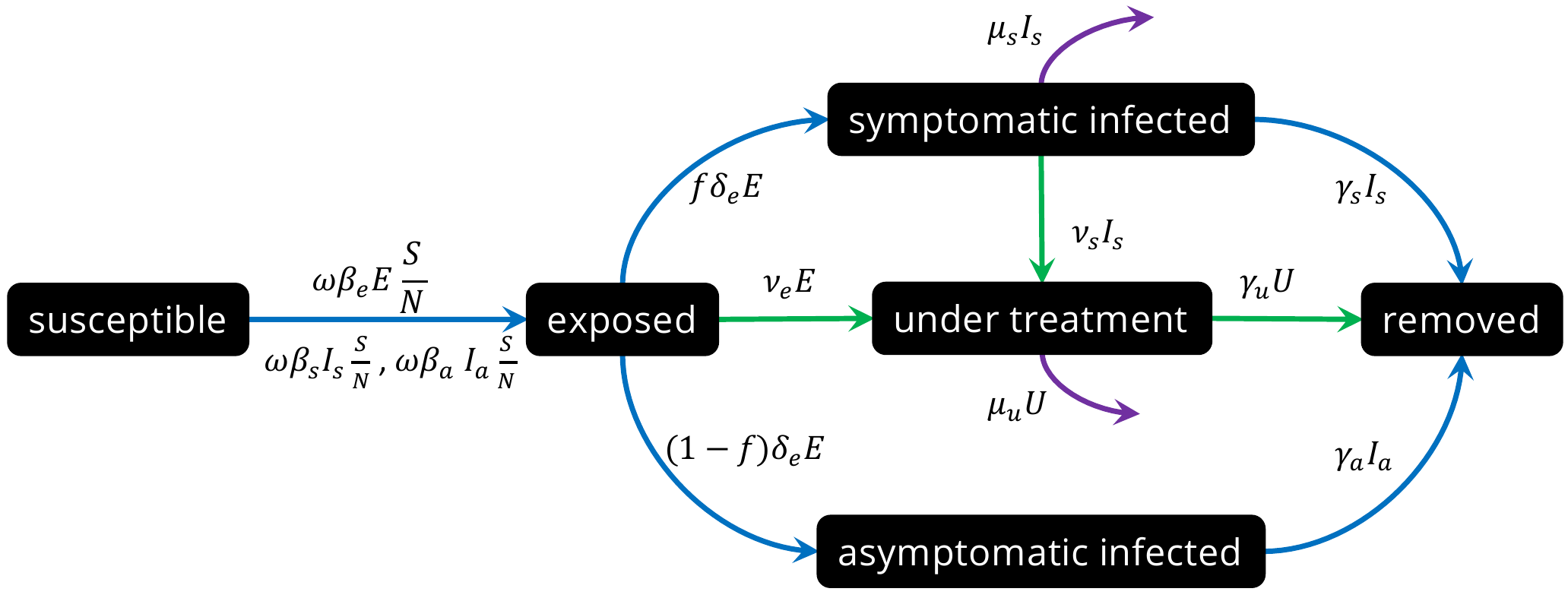} 
\\[-0.25cm]\caption{Compartmental representation of the $SEI_aI_sUR-$model. 
Blue arrows represent the infection flow. Green arrows denote for the treatments. Purple arrow is the death.}
\label{fig0}
\end{figure}

We focus our study on six components of the epidemic flow (Figure \ref{fig0}), {\it i.e.} Susceptible individual ($S$),
Exposed individual ($E$), Symptomatic Infected individual ($I_s$), Asymptomatic Infected individual ($I_a$), Under treatment individual ($U$), and Removed individual ($R$). 
To build the mathematical model, we followed the standard strategy developed in the literature concerning SIR model \citep{odi2000,bra2012}. We assumed that susceptible can be infected by exposed, symptomatic infected as well as asymptomatic infected individuals. We supposed that exposed and symptomatic infected can be treated without distinguishing quarantine or hospitalization. 
The dynamics is governed by a system of six ordinary differential equations (ODE) as follows
\begin{eqnarray*}
  S'(t) &=& - \omega \left( \beta_e E + \beta_s I_s + \beta_a I_a \right) \frac{S}{N} 
    \\
  E'(t) &=&    \omega\left( \beta_e E + \beta_s I_s + \beta_a I_a \right) \frac{S}{N} - (\delta_e + \nu_e) E 
    \\
  I_s'(t) &=& f\delta_e E - (\gamma_s + \mu_s + \nu_s ) I_s
    \\
  I_a'(t) &=& (1-f) \delta_e E - \gamma_a I_a
  \\
  U'(t) &=& \nu_e E + \nu_s I_s - (\gamma_u +\mu_u) U 
    \\
  R'(t) &=& \gamma_s I_s + \gamma_a I_a + \gamma_u U
\end{eqnarray*}
Note that the total living population follows $
    N'(t) =  - \mu_s I_s - \mu_u U $,
while death is computed by
$
    D'(t) = \mu_s I_s + \mu_u U.
$
We assume that there is no new recruit.
The parameters are described in Table A of Figure \ref{fig1}.

\subsection{Parameters estimation}

Calibration is made before intervention. Thus it is set $\nu_e = \nu_s = 0$.
The model is made up of eleven parameters $\theta = (\omega,\beta_e,\beta_s,\beta_a,\delta_e,f,\gamma_s,\mu_s,\gamma_a, \gamma_u,\mu_u)$ that need to be determined. Given, for $N$ days, the observations 
$I_{s, obs}(t_i)$ and $D_{obs}(t_i)$, the cost function consists of a nonlinear least square function
$$
J(\theta) = \sum_{i=1}^N (I_{s, obs}(t_i) - I_s(t_i,\theta))^2 +  (D_{obs}(t_i) - D(t_i,\theta))^2  ,
$$
with constraints $\theta \geq 0$, and $0 \leq f \leq 1$.
Here $I_s(t_i,\theta)$ and $D(t_i,\theta)$ denote output of the mathematical model at time $t_i$ computed with the parameters $\theta$.  The optimization problem is solved using Approximate Bayesian Computation combined with a quasi-Newton method \citep{Csi2010}.

\section{Results}

\subsection{Basic and effective reproduction numbers}

It is standard to check that the domain
$$ \Omega = \{ (S, E, I_s, I_a, U, R) \in \R_+^6; 0 \leq S + E + I_s + I_a + U + R \leq N(0) \}
$$
is positively invariant. In particular, 
there exists a unique global in time solution $(S, E, I_s, I_a, U, R)$ in $\mathcal{C} (\R_+; \Omega)$ as soon as the initial condition lives in $\Omega$.
\\
Since the infected individuals are in $E, I_a$ and $I_s$, 
the rate of  new infections in each compartment ($\mathcal{F}$)  and  the rate of other transitions between compartments 
($\mathcal{V}$) can be rewritten  as 
$$\mathcal{F} =\begin{pmatrix} 
 \omega ( \beta_e E + \beta_s I_s + \beta_a I_a ) \frac{S}{N} \\
0 \\ 
0 
 \end{pmatrix}, 
\; \mathcal{V} = \begin{pmatrix} 
(\delta_e +\nu_e) E \\
(\gamma_s + \mu_s + \nu_s) I_s - f \delta_e E\\
 \gamma_a I_a - (1- f) \delta_e E
\end{pmatrix}.
$$
Thus,
 $$
F = \begin{pmatrix}
 \frac{  \omega\beta_e S }{N} &  \frac{ \omega \beta_s S }{N} &  \frac{ \omega\beta_a S }{N} \\
  0 & 0 & 0 \\
  0 & 0 & 0
 \end{pmatrix},$$
and 
$$
V= \begin{pmatrix}  (\delta_e +\nu_e)   & 0 & 0 \\
 -f \delta_e & \gamma_s + \mu_s + \nu_s & 0 \\
 -(1 - f) \delta_e & 0 & \gamma_a
  \end{pmatrix}, \;
V^{-1}= \begin{pmatrix}  
\frac{1}{\delta_e +\nu_e} & 0 & 0 \\
\frac{f}{\gamma_s + \mu_s + \nu_s } & \frac{1}{\gamma_s + \mu_s + \nu_s} & 0 \\
\frac{1 -f }{ \gamma_a } & 0 & \frac{1}{ \gamma_a}
  \end{pmatrix}.
  $$
  Therefore,  the next generation matrix is
  \begin{align*}
 FV^{-1} &= \begin{pmatrix} 
 \frac{  \omega\beta_e S }{ (\delta_e +\nu_e)  N } + \frac{ f  \omega\beta_s S }{ (\gamma_s + \mu_s + \nu_s) N } + \frac{ ( 1 - f)  \omega\beta_a S }{ \gamma_a N} &  \frac{  \omega\beta_s S }{ (\gamma_s + \mu_s + \nu_s) N } &  \frac{ \omega\beta_a S }{ \gamma_a N} \\
  0 & 0 & 0 \\
  0 & 0 & 0
 \end{pmatrix}.
\end{align*}
We deduce as in \cite{vdd2000} that the basic reproduction number $\mathcal{R}_0$  for the
Disease Free Equilibrium $(S^*, 0, 0, 0, R^*)$, with $N^* = S^* + R^*$, is
$$
\mathcal{R}_0 :=  \omega\left( \frac{ \beta_e  }{ \delta_e +\nu_e  } + \frac{ f \beta_s  }{ \gamma_s + \mu_s + \nu_s } + \frac{ ( 1 - f) \beta_a  }{ \gamma_a } \right) \frac{S^*}{N^*} =: \mathcal{R} \frac{S^*}{N^*}. 
$$
This number has an epidemiological meaning. The term $ \frac{ \omega \beta_e  }{ \delta_e +\nu_e  }$ represents the contact rate with exposed during the average latency period  $1/(\delta_e +\nu_e)$. The term $\frac{ \omega f \beta_s  }{ \gamma_s + \mu_s + \nu_s }$ is the contact rate with symptomatic during the average infection period, and the last one is the part of asymptomatic.

In the subsequent, we write \textit{DFE} when we mean by Disease Free Equilibrium.
\begin{thm}
The DFE $(S^*, 0, 0, 0, R^*)$ is the unique positive equilibrium. Moreover it is globally asymptotically stable.
\end{thm}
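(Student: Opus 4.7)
The plan is to enumerate the equilibria first, and then to establish global attractivity by combining monotone convergence of the non-infected variables with a cascaded application of Barbalat's lemma to the infected ones.

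\emph{Equilibria.} Setting every right-hand side to zero and starting with $R' = 0$, strict positivity of $\gamma_s, \gamma_a, \gamma_u$ together with $I_s, I_a, U \geq 0$ forces $I_s = I_a = U = 0$. Then $I_s' = 0$ yields $E = 0$ (since $f, \delta_e > 0$), and the remaining equations are trivially satisfied. Hence the equilibrium set is exactly $\{(S^*, 0, 0, 0, R^*) : S^*, R^* \geq 0\}$, which is the announced uniqueness modulo the two free parameters.

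\emph{Convergence.} On the positively invariant region $\Omega$ one has $S'(t) \leq 0$, $R'(t) \geq 0$ and $N'(t) = -\mu_s I_s - \mu_u U \leq 0$; all three are bounded, so by monotone convergence $S(t) \to S^*$, $R(t) \to R^*$ and $N(t) \to N^*$. Integrating $R'(t) = \gamma_s I_s + \gamma_a I_a + \gamma_u U$ over $\R_+$ yields $I_s, I_a, U \in L^1(\R_+)$; since all state variables are bounded on $\Omega$, the corresponding derivatives are bounded, so $I_s, I_a, U$ are uniformly continuous and Barbalat's lemma forces them to zero. Next, integrating $I_s' = f\delta_e E - (\gamma_s + \mu_s + \nu_s) I_s$ on $[0, \infty)$ combined with $\int_0^\infty I_s\,dt < \infty$ gives $\int_0^\infty E\,dt < \infty$, and a second application of Barbalat yields $E(t) \to 0$. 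The conservation identity $S + E + I_s + I_a + U + R = N$ then also forces $N^* = S^* + R^*$.

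\emph{Main obstacle.} The subtle point is the stability statement: the equilibrium set is a two-parameter manifold, the limit $(S^*, R^*)$ depends on the initial data, and classical pointwise Lyapunov stability fails for those DFEs whose linearisation has spectral radius of $FV^{-1}$ evaluated at $S^*/N^*$ exceeding one. I therefore read ``globally asymptotically stable'' as global attractivity of the DFE set, which is exactly what the cascaded Barbalat argument delivers; set-stability can then be closed by a comparison bound on the infected subsystem using $S/N \leq 1$ and the boundedness of $\Omega$, or via LaSalle's invariance principle with the Lyapunov function arising from a positive left eigenvector of $FV^{-1}$. I anticipate the two-step integrability chain (first $I_s, I_a, U \in L^1$ from $R'$, then $E \in L^1$ from $I_s'$) to be the only technically delicate piece of the attractivity argument.
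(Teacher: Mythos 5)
Your proof is correct and follows essentially the same route as the paper: integrate $R'=\gamma_s I_s+\gamma_a I_a+\gamma_u U$ to get integrability of the infected compartments, conclude they vanish at infinity, bootstrap to $E$ via the $I_s'$ (or $I_a'$) equation, and use monotonicity of $S$ and $R$. In fact your version is tighter than the paper's: where the paper asserts that $I_s, I_a, U\to 0$ ``thanks to the positivity of the solution'', which does not follow from $L^1$-integrability and positivity alone, you correctly supply the missing ingredient (bounded derivatives, hence uniform continuity, hence Barbalat), and your remark that ``globally asymptotically stable'' must be read as global attractivity of the two-parameter equilibrium set --- since the limit $(S^*,R^*)$ depends on the initial data and pointwise Lyapunov stability fails when $\mathcal{R}_0>1$ --- is a legitimate caveat the paper glosses over.
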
 

\begin{proof}
By computing the eigenvalues of the Jacobian matrix, we deduce that if $\mathcal{R}_0 < 1$,
then DFE is locally asymptotically stable.

Here we will prove that global asymptotic stability is independent that of $\mathcal{R}_0$. Indeed, from the last differential equation in our system of ODE, we can deduce that
$R$ is an increasing function bounded by $N(0)$. Thus $R(t)$ converges to $R^*$ as $t$ goes to $+\infty$.
Then integrating over time this equation provides
$$
    R(t) - R(0) =  \int_0^t \gamma_s I_s(s) + \gamma_a I_a(s) + \gamma_u U(s) \, ds
$$
and
$$
    R^* - R(0) = \gamma \int_0^{+\infty} \gamma_s I_s(s) + \gamma_a I_a(s) + \gamma_u U(s) \, ds, 
$$
which is finite. Furthermore, $I_s(t), I_a(t), U(t)$ also go to $0$ as  $t\to +\infty$
thanks to the positivity of the solution. Similarly, integrating the third or the fourth equation in the system gives   $E(t) \to_{t\to +\infty} 0$. Finally, the first equation points to the decreasing of $S$ that is bounded by below by $0$ and $S(t) \to_{t\to +\infty} S^*$.
\end{proof} 

This theorem means that the asymptotic behavior does not depend on $\mathcal{R}_0$. 
For all initial data in $\Omega$, the solution converges to the DFE when time goes to infinity.
Nevertheless, to observe initial exponential growth, $\mathcal{R}_0>1$ is necessary. Indeed, $S$ is initially close to $N$ such that infected states
are given by the linear system of differential equations
\[
  \begin{pmatrix}
    E \\ I_s \\ I_a
  \end{pmatrix}'(t) =   \begin{pmatrix} \omega \beta_e -\delta_e -\nu_s & \omega \beta_s & \omega \beta_a \\
            f\delta_e & -(\gamma_s+\mu_s+\nu_s) & 0
            \\ (1-f)\delta_e & 0 & -\gamma_a
     \end{pmatrix}   \begin{pmatrix}
    E \\ I_s \\ I_a
  \end{pmatrix}. 
\]
The characteristic polynomial  is $P(x) = x^3 + a_2 x^2 + a_1 x + a_0$,  with $a_0 =  (\gamma_s+\mu_s+\nu_s)\gamma_a (\delta_e +\nu_s)  \left( 1- \mathcal{R} \right)$.
If $\mathcal{R}>1$, there is at least one positive eigenvalue that coincides with an initial exponential growth rate of solutions.

To better reflect the time dynamic of the disease, the effective reproduction number  
$$ 
\mathcal{R}_{\rm eff}(t) =  \omega\left( \frac{ \beta_e  }{ \delta_e +\nu_s  } + \frac{ f \beta_s  }{ \gamma_s + \mu_s + \nu_s } + \frac{ ( 1 - f) \beta_a  }{ \gamma_a } \right) \frac{S(t)}{N(t)} .
$$
is represented in Figure \ref{fig1}D and values of $\mathcal{R}$ are computed in Table A of Figure \ref{fig1}.

\subsection{Model resolution}

To calibrate the model, our simulations start the day of first confirmed infection and finish before interventions to reduce the disease.
Therefore $\nu_e$ and $\nu_s $ are assumed to be equal to $0$. We assume that the whole population of the country is susceptible to the infection. Seven states with comparable populations are chosen.
The objective function $J$ is computed to provide a relative error of order less than $10^{-2}$.
In Figure \ref{fig1}, Table A. shows estimated parameters.
The rest of the Figure presents the solution and data. Note that the product $\omega \beta_i$ ($i=e,s,a$) is uniquely identifiable but not $\omega$ and $\beta_i$ separately. Figure \ref{fig2} represents the effective reproduction number, the fitted symptomatic infected and death of the posterior distribution.

\begin{figure}[htbp]
\centering
  \includegraphics[scale=.65]{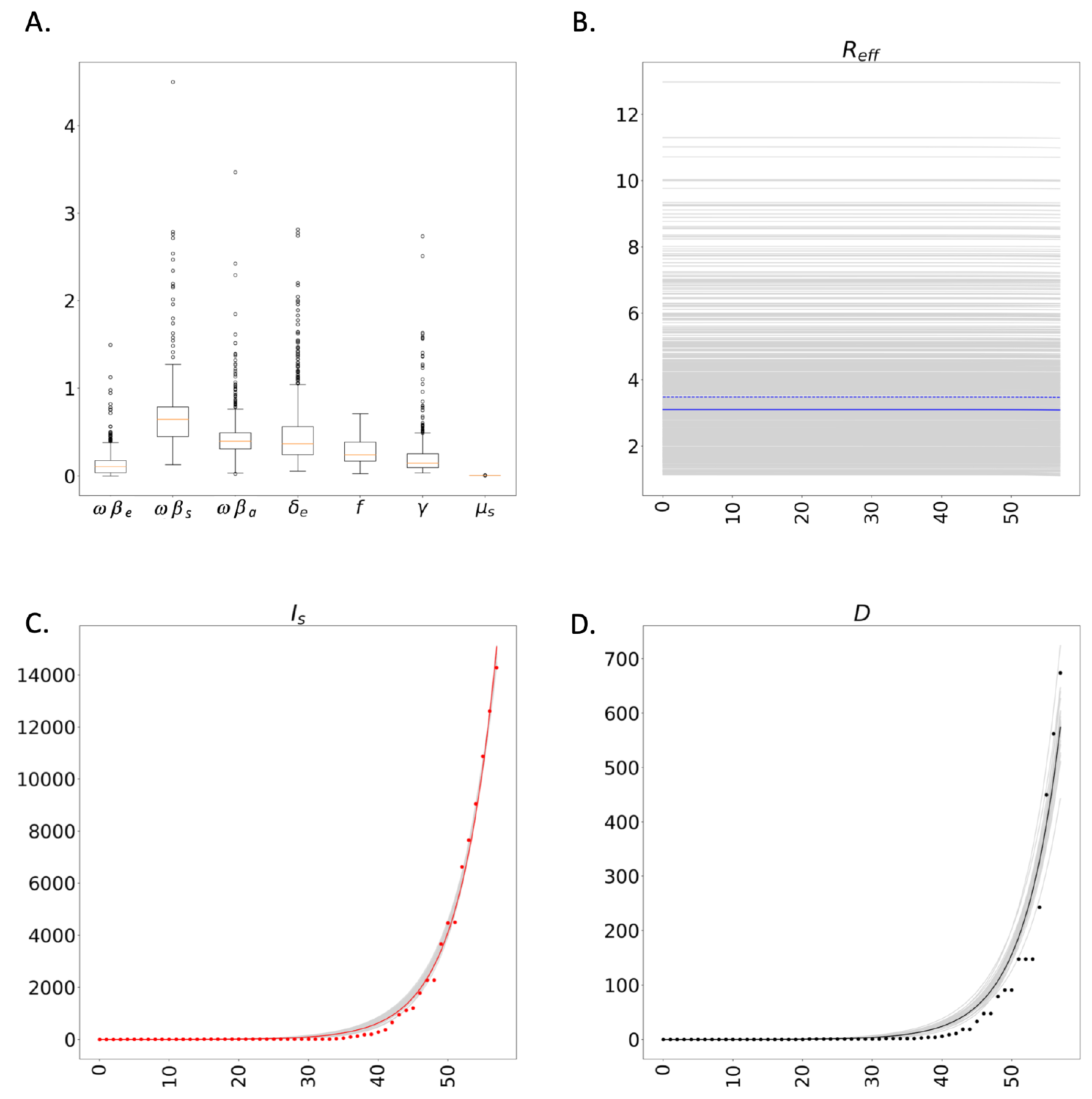}
\\[-0.5cm]\caption{A. Boxplot of the posterior distribution computed from France data. B. Effective reproduction number in grey of the posterior distribution, median ($= 3.096738 $) in blue straight line, mean ($= 3.474858 $) is dotted line. C. Fitted symptomatic infected in grey of the posterior distribution, median in red straight line, mean is dotted line. D. Fitted death in grey of the posterior distribution, median in black straight line, mean is dotted line.}
\label{fig2}
\end{figure}

\begin{landscape}

\begin{figure}[htbp]
\centering
  \includegraphics[scale=.7]{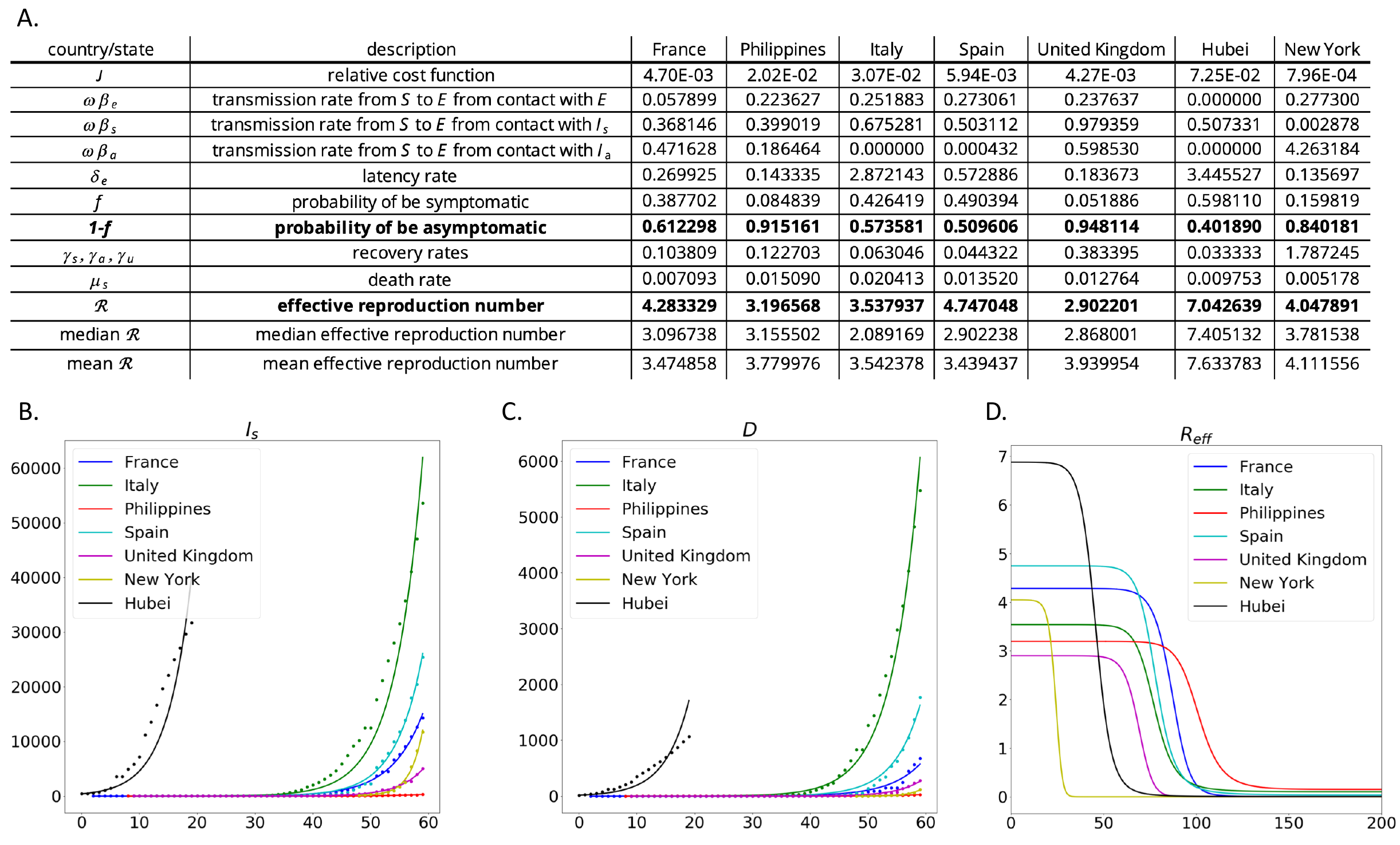}
\\[-0.5cm]\caption{A. Parameters calibrated according to data  from France, Philippines, Italy, Spain, United Kingdom, Hubei, and New York.  B. and C. Calibrated solution (straight line) and data (dots) with respect to day for France, Philippines, Italy, Spain, United Kingdom, Hubei and New York. First is the infected $I_s$ (B.), and the second one is the death $D$ (C.). D. Effective reproduction number with respect to day.}
\label{fig1}
\end{figure}

\end{landscape}

\newpage

\subsection{Strategy to reduce disease to a given threshold}

To temporarily reduce the value of $\mathcal{R}_{\rm eff}$, three strategies regarding different interventions are being compared. 
The first one consists in reducing the number of contacts $\omega$. Full containment is translated by $\omega=0$, and no containment by $\omega=1$.
The second strategy is expressed by treatment of symptomatic infected. It is translated by modifying the value of $\nu_s$.
The treatment of exposed is the third one. Parameters  $\omega, \nu_e$, and $\nu_s$ vary such that intervention is carried out from $0$ to $100\%$ of susceptible, exposed, and symptomatic respectively.
Given a critical infection threshold $\mathcal{T}_c$, the model runs until the time $t_c$  such that 
$$ E(t_c) + I_s(t_c) + I_a(t_c) \leq \mathcal{T}_c. $$

To juxtapose the benefit of the intervention, we assume that interventions start $53$ days after the first confirmed infection.
We remind that the first infection in France was confirmed on January $24^{\rm th}$, 2020, and containment begins on March $17^{\rm th}$, 2020.
Comparison between three strategies can be found in Table \ref{tab2} and Figures \ref{fig3}-\ref{fig3b}. Without intervention to control the disease, the maximum number of symptomatic infecteds varies from $3.49 \times 10^5$
to $2.02 \times 10^7$. The maximum number of deaths totals from $8.85 \times 10^3$ to $7.92 \times 10^6$.
We observe that any intervention strongly reduces the number of dead. Concerning France, Philippines, Italy, Spain, and the United Kingdom, when containment is fully respected and when the sum of infecteds is reduced to $1$,  the maximum number of symptomatic infecteds and deaths
has been cut sharply, of order $10^3$. It varies now from $5.75 \times 10^2$
to $7.04 \times 10^4$ and $1.94 \times 10^2$
to $2.52 \times 10^4$ respectively. To wait from $104$ to $407$ days is the 
price to pay. On the contrary, for the states of Hubei and New York,  $53$ days to intervene seems to be already too late.
We can also see  in Table \ref{tab2} that treating only the symptomatic does not reduce the duration
\\
Note that when the intervention ends at time $t_c$, the number of
susceptible $S(t_c)$ is large so that the effective reproduction number 
$\mathcal{R}_{\rm eff}$ is larger than $1$. 

Figure \ref{fig4} compares the maximum number of dead and symptomatic but infected individuals, as well as the intervention duration, to reach $\mathcal{T}_c=1000$ varying from $0$ to $100\%$.  Containment is the most efficient when it is respected by more than $76\%$ in France, $63\%$ in the Philippines. Beyond that, treating the exposed is the best choice. We also observe that the intervention duration becomes long below $89\%$ in France, $82\%$ in the Philippines. This can be understood by too little susceptibility to achieve recovery but enough for the disease to persist.

\begin{table}[htbp]
\begin{center}
\includegraphics[scale=.7]{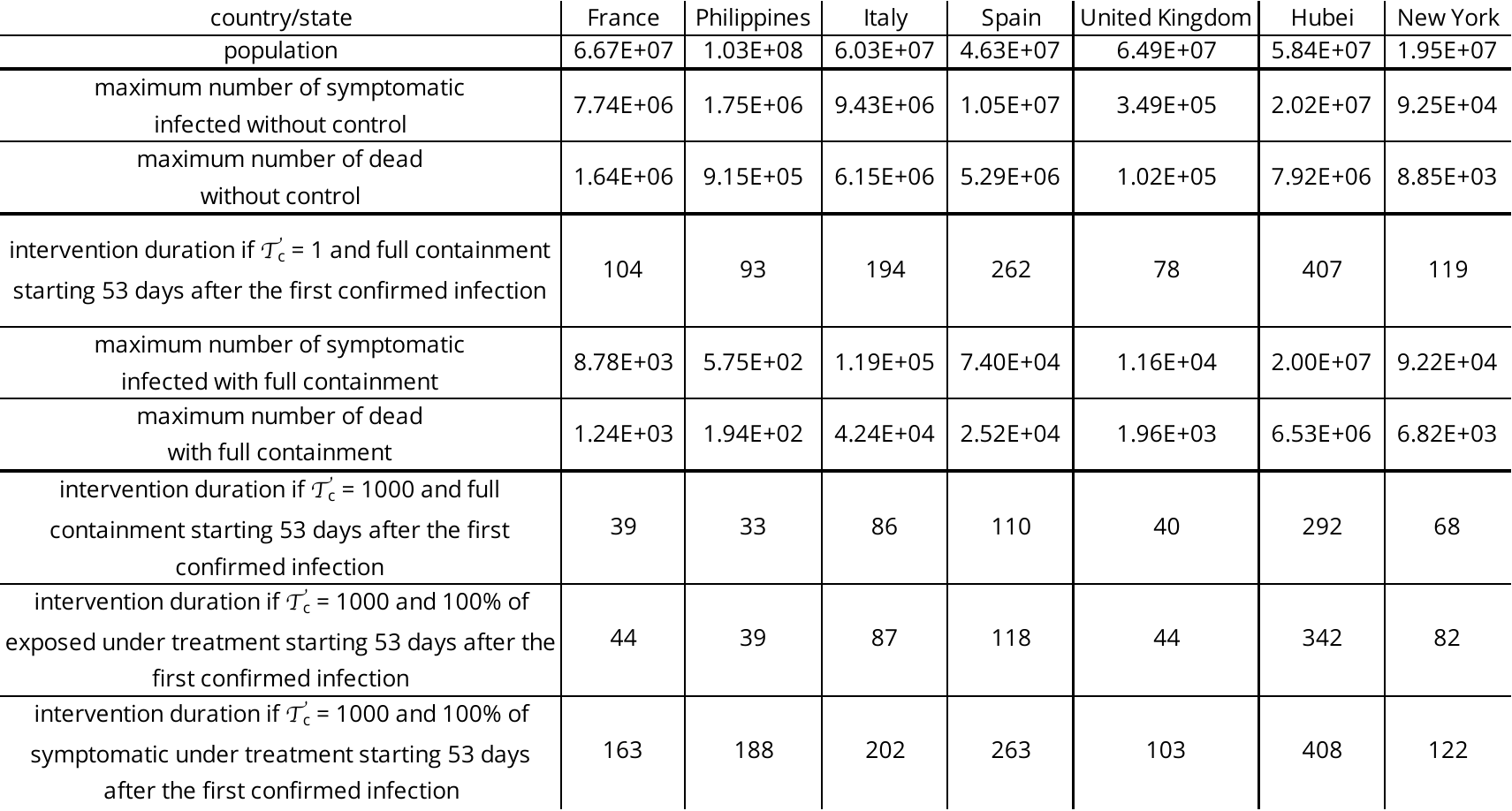}
\caption{Comparison between the maximum number of symptomatic infected and death without control and the solution reducing contact rate to $0$, $100\%$ of exposed under treatment, and $100\%$ of symptomatic infected under treatment to reach $\mathcal{T}_c = 1$ and $\mathcal{T}_c = 1000$. Interventions are assumed to being $53$ days after the first confirmed infection.}
\label{tab2}
\end{center}
\end{table}

\newpage

\begin{landscape}

\begin{figure}[htbp]
\centering
 \includegraphics[scale=.65]{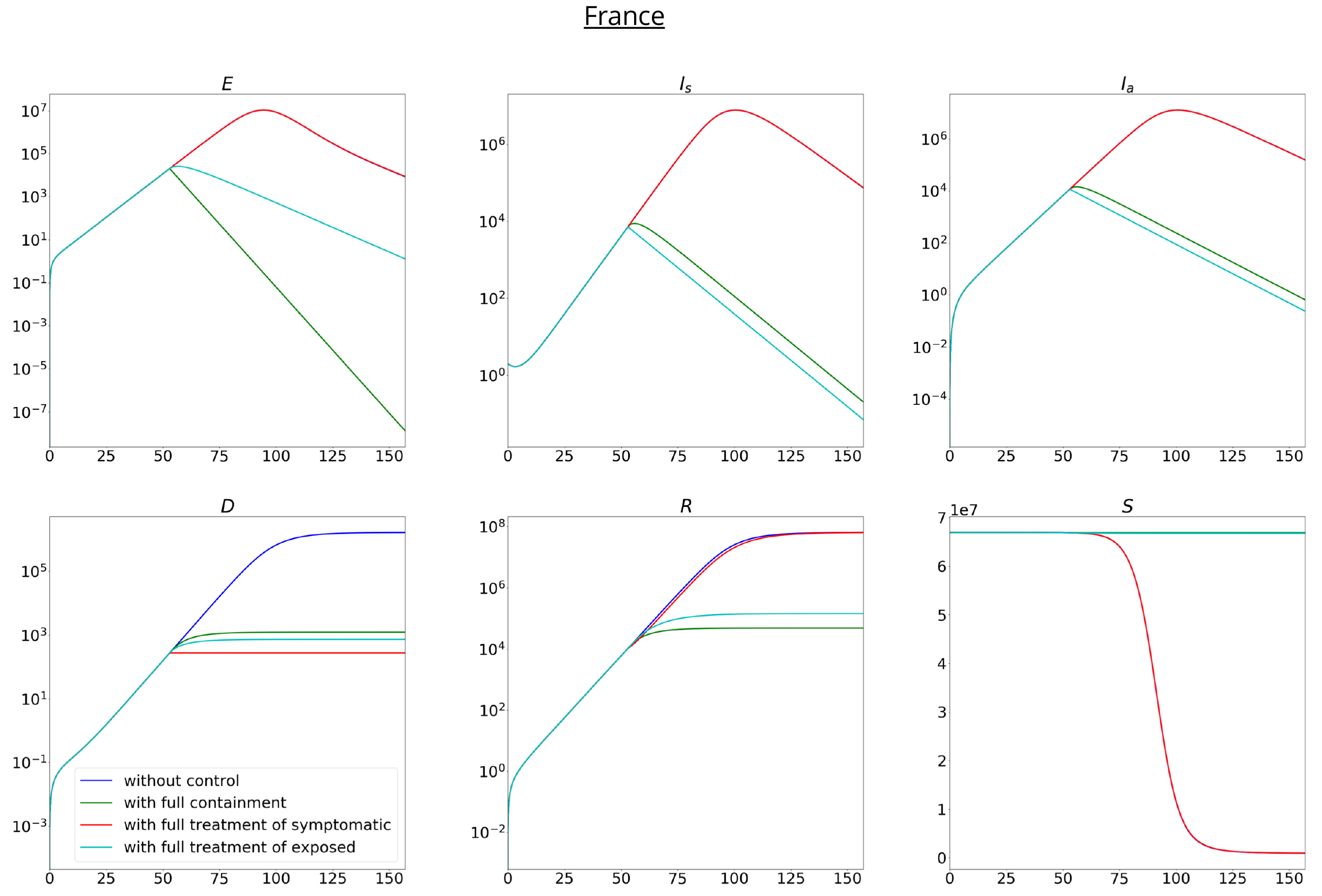}
\\[-0.5cm]\caption{Comparison of solutions $S, E, I_s, I_a, R, D$ without control in blue, full containment in green,
 full treatment of symptomatic in red, and full treatment of exposed in cyan for France. Ordinate axis is expressed in log.}
\label{fig3}
\end{figure}

\end{landscape}

\newpage

\begin{landscape}

\begin{figure}[htbp]
\centering
 \includegraphics[scale=.65]{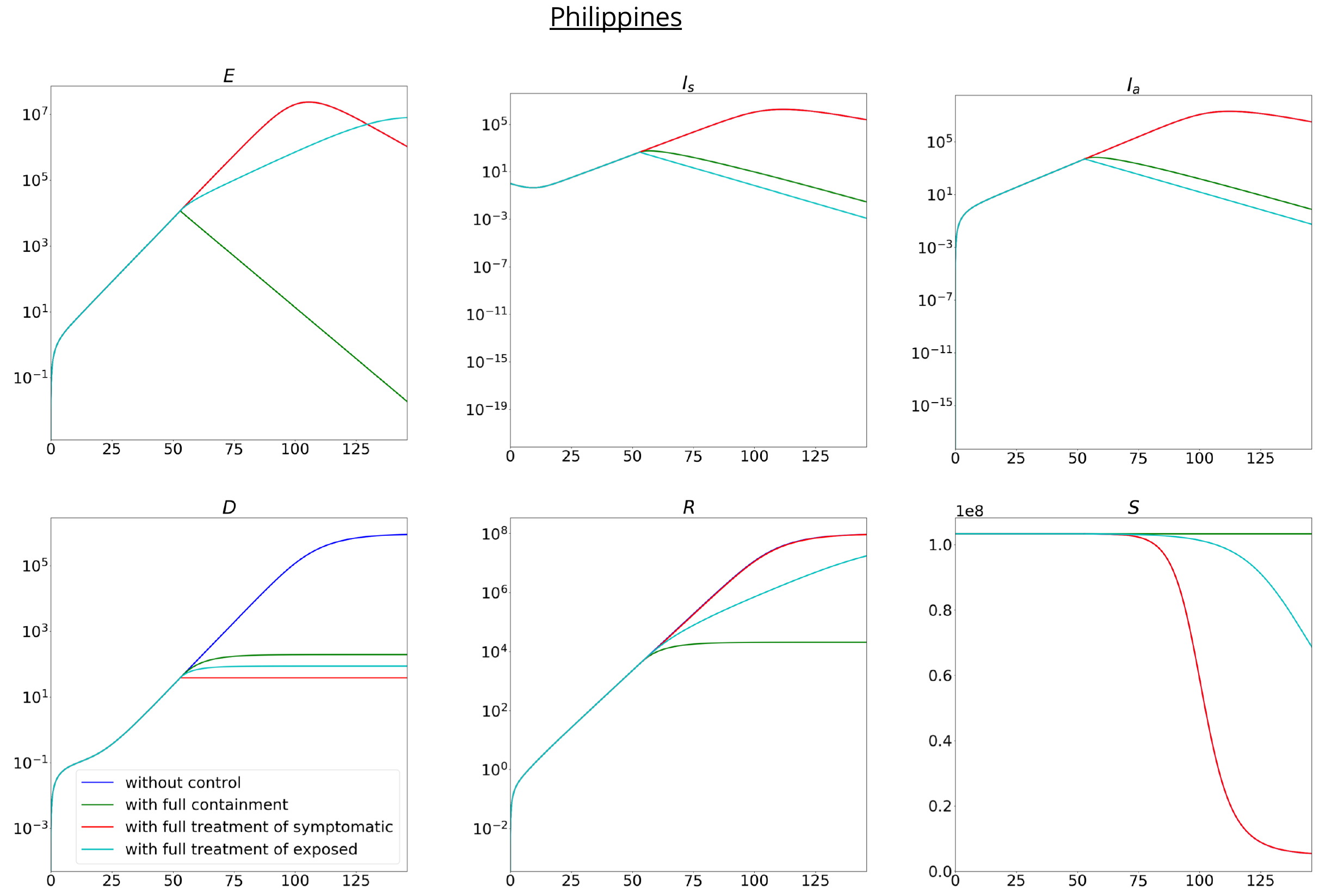}
\\[-0.5cm]\caption{Comparison of solutions $S, E, I_s, I_a, R, D$ without control in blue, full containment in green,
 full treatment of symptomatic in red, and full treatment of exposed in cyan for the Philippines. Ordinate axis is expressed in log.}
\label{fig3b}
\end{figure}

\end{landscape}

\newpage
\begin{landscape}
\begin{figure}[htbp]
\centering
 \includegraphics[scale=.7]{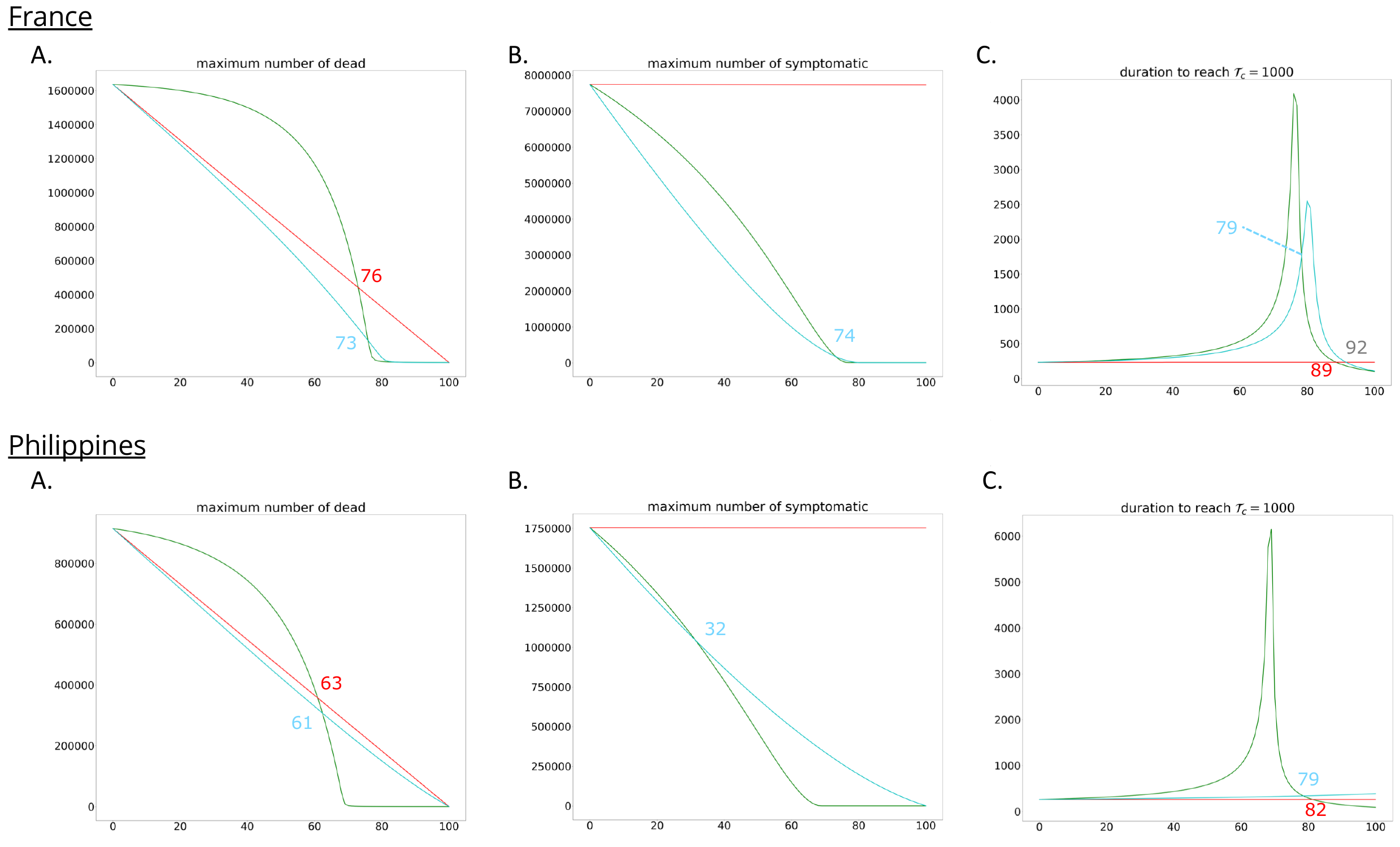}
 \\[-0.5cm]\caption{A. Comparison of the maximum number of dead.  B. Comparison of the maximum number of symptomatic infected. C. Comparison of the intervention duration to reach $\mathcal{T}_c = 1000$ with respect to percentage of containment (green),
  treatment of symptomatic (red), and  treatment of exposed (cyan) for France and Philippines.}
\label{fig4}
\end{figure}

\end{landscape}

\section{Discussion}
Without intervention, we observe in Figures \ref{fig3}-\ref{fig3b} that the number of susceptible $S$ is decreasing; most of the individuals are recovering, which generates population immunity. It translates that the disease free equilibrium is globally asymptotically stable. 
Nevertheless, the price to pay is high, the number of deaths being excessive.
As presented in Figure \ref{fig1}, the effective reproduction is decreasing and points out that control has to be done as fast as possible.

The other important information is that, as discovered by \cite{dan2020}, an alternative transmission way may occur.
Here, it is due to the proportion of asymptomatic infected individuals that is not negligible, as shown in Table \ref{tab2}.

Finally, with the little knowledge about COVID-19 nowadays, decreasing transmission, {\it i.e.} $\beta_e, \beta_s, \beta_a$, is the preferred option. The simplest choice consists in reducing contact between individuals. 
 Table \ref{tab2} and Figures \ref{fig3}-\ref{fig3b}-\ref{fig4} show that total and partial containment do indeed drastically reduce the disease.
However, the duration of containment may be too long and then impracticable especially if we aim at totally eradicating the infection ($\mathcal{T}_c = 1$).
Instead, to stop the containment as soon as the capacity of the hospitals has been reached could be privileged. When this criterion is set to 1000 patients ($\mathcal{T}_c = 1000$), the duration goes from $104$ to $39$ days for France. A similar reduction in duration is also obtained for other countries. Again, we see that the earlier the intervention, the more effective it is. Due to the high number of susceptible, it is worth noting that the effective reproduction number remains large after containment. Screening tests, especially to carry out exposed individuals, are then necessary to be carried out, and the positive individuals are quarantined.

\bibliographystyle{abbrvnat}
{
\bibliography{references}
}


\end{document}